\newtheorem{lemma}{Lemma}
\newtheorem{theorem}[lemma]{Theorem}
\newtheorem{corollary}[lemma]{Corollary}
\newtheorem{remark}[lemma]{Remark}
\newtheorem{claim}[lemma]{Claim}
\newcommand{\lf}{\lfloor}
\newcommand{\rf}{\rfloor}
\newcommand{\lc}{\lceil}
\newcommand{\rc}{\rceil}
\newcommand{\HH}{\mathcal{H}}
\newcommand{\II}{\mathcal{I}}
\newcommand{\calS}{\mathcal{S}}
\newcommand{\ZZ}{\mathcal{Z}}
\newcommand{\CN}{\mathcal{CN}}
\renewcommand{\phi}{\varphi}
\newcommand{\C}{{\mathbb{C}}}
\newcommand{\inv}{^{-1}}
\newcommand{\beq}{\begin{equation}}
\newcommand{\eeq}{\end{equation}}
\newcommand{\Id}{\mathrm{Id}}
\newcommand{\lrf}[1]{\left\lf #1 \right\rf}
\newcommand{\lrc}[1]{\left\lc #1 \right\rc}
\newcommand{\Epq}{{E_{pq}}}
\newcommand{\Eqp}{{E_{qp}}}
\newcommand{\tr}{\mathrm{Tr}}
\newcommand{\conj}{\mathrm{conj}}
\newcommand{\dc}{\lrc{\frac d2}}
\newcommand{\df}{\lrf{\frac d2}}
\author{Guy Bresler
\thanks{Guy Bresler is with Wireless Foundations,
Dept of EECS, UC Berkeley, Berkeley, CA 94720,
Email: \textsf{gbresler@eecs.berkeley.edu}.
}
\and
Dustin Cartwright
\thanks{Dustin Cartwright is with
Institut Mittag-Leffler, Djursholm, Sweden
Email: \textsf{dustin@math.berkeley.edu}}
\and
David Tse
\thanks{David Tse is with Wireless Foundations,
Dept of EECS, UC Berkeley, Berkeley, CA 94720,
Email: \textsf{dtse@eecs.berkeley.edu}}}
\title{Settling the feasibility of interference alignment for the MIMO interference channel: the symmetric square case}
\begin{document}
\date{}
\maketitle
\begin{abstract}

Determining the feasibility conditions for vector space interference alignment in the $K$-user MIMO interference channel with constant channel coefficients has attracted much recent attention yet remains unsolved.
The main result of this paper is restricted to the symmetric square case where all transmitters and receivers have $N$ antennas, and each user desires $d$ transmit dimensions. We prove that alignment is possible if and only if the number of antennas satisfies $N\geq d(K+1)/2$. We also show a necessary condition for feasibility of alignment with arbitrary system parameters.
An algebraic geometry approach is central to the results.
\end{abstract}

\section{Introduction}
Interference alignment has inspired much hope as the savior to the problem of
communication in the presence of interference. Introduced by Maddah-Ali et al.
\cite {MMK08} for the multiple-input multiple-output (MIMO) X channel and subsequently by
Cadambe and Jafar \cite{CJ08} in the context of the $K$-user interference
channel (IC), the basic idea is to align multiple interfering signals at each
receiver in order to reduce the effective interference. For the $K$-user IC, in
the case of independently faded parallel channels (i.e. time or frequency
selective), it was shown in \cite{CJ08} that up to $K/2$ total
degrees-of-freedom is achievable: amazingly, this implies that each user gets
the same degrees of freedom as in a simple 2-user IC. However, the result
depends critically on the assumption that the number of independently faded
parallel channels, i.e. the channel diversity, is \emph{unbounded} and in fact grows like $K^{2K^2}$. A physical
system has only a \emph{finite} channel diversity, which raises the question of how much interference alignment is possible if there is some fixed---finite---number of parallel channels. This problem was addressed for the 3-user channel in \cite{BT09}.

In this paper we consider the $K$-user MIMO IC, where each transmitter and receiver has multiple antennas, but the channel is constant over time and frequency. Similar in flavor to the situation with finite time or frequency diversity in \cite{BT09}, here we have a fixed amount of spatial diversity due to the multiple antenna elements, and the goal is to design the best communication strategy for the system at hand. We restrict attention to vector space strategies, both due to their relative tractability and because we believe the essence of the interference problem remains.
As discussed subsequently, the problem of maximizing the degrees-of-freedom is equivalent to determining whether it is possible to align interfering signals (termed \emph{the alignment problem}), given the system parameters and desired signal dimensions $d_i$ for  $1\leq i\leq K$. Our main result is restricted to the fully symmetric case with $N$ antennas at each transmitter and receiver, and $d$ desired signalling dimensions per user. In this setting, we completely characterize the feasibility of interference alignment, informally stated as follows:
\begin{theorem}\label{t:symmetricFeasibility}
Fix the number of users $K$, number of antennas $N$ at transmitters and receivers, and desired number of signal dimensions $d$ per user.
For generic channel matrices (equivalently for non-degenerate continuously distributed entries) the alignment problem is feasible if and only if
$$
N\geq \frac{d(K+1)}2\,.
$$
\end{theorem}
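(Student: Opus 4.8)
The plan is to translate the alignment problem into a question about the fibers of a projection between algebraic varieties, and then treat the two implications separately; the ``if'' direction is by far the harder one. Recall that alignment with parameters $(N,d,K)$ is feasible for a channel $H=(H_{ij})_{1\le i,j\le K}$ precisely when there are subspaces $U_1,\dots,U_K,V_1,\dots,V_K\in\mathrm{Gr}(d,N)$ with $U_i^\ast H_{ij}V_j=0$ for all $i\neq j$ and $U_i^\ast H_{ii}V_i$ invertible for all $i$. I would introduce the incidence variety
\[
\mathcal{V}=\bigl\{(H,U_1,\dots,U_K,V_1,\dots,V_K):U_i^\ast H_{ij}V_j=0\ \text{ for all }i\neq j\bigr\}\subseteq \mathcal{H}\times\mathrm{Gr}(d,N)^{2K},
\]
with $\mathcal{H}$ the affine space of channel tuples, and the projection $\pi:\mathcal{V}\to\mathcal{H}$, so that alignment is feasible for generic $H$ (equivalently, with probability one under any non-degenerate continuous channel law) if and only if $\pi$ is dominant. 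The first step is to observe that over each point of $\mathrm{Gr}(d,N)^{2K}$ the fiber of $\mathcal{V}$ is the linear space obtained by forcing each $H_{ij}$ ($i\ne j$) into the codimension-$d^2$ subspace $\{A:U_i^\ast AV_j=0\}$ while leaving the $H_{ii}$ free; hence $\mathcal{V}$ is the total space of a vector bundle over $\mathrm{Gr}(d,N)^{2K}$, so it is smooth and irreducible, and $\dim\mathcal{V}=\dim\mathcal{H}+Kd\bigl(2N-(K+1)d\bigr)$.

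Necessity is then immediate: if $N<d(K+1)/2$ then $\dim\mathcal{V}<\dim\mathcal{H}$, so $\overline{\pi(\mathcal{V})}$ is a proper closed subvariety of $\mathcal{H}$, its complement is a nonempty Zariski-open set of full Lebesgue measure, and every channel in it is infeasible. This settles the ``only if'' part for all non-degenerate continuously distributed channel laws.

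For sufficiency assume $N\ge d(K+1)/2$. A short monotonicity argument reduces to the extremal antenna count: feasibility for $(N_0,d,K)$ with $N_0\le N$ implies feasibility for $(N,d,K)$ by composing the beamformers with generic injections $\mathbb{C}^{N_0}\hookrightarrow\mathbb{C}^{N}$ and projections $\mathbb{C}^{N}\twoheadrightarrow\mathbb{C}^{N_0}$ (the effective channel $QH_{ij}P$ is again generic), and feasibility for $(N,d,K)$ implies feasibility for $(N,d',K)$, $d'\le d$, by passing to sub-beamformers. So it suffices to treat $N_0=\lceil d(K+1)/2\rceil$, i.e.\ principally the \emph{tight} case $2N=d(K+1)$, where $\dim\mathcal{V}=\dim\mathcal{H}$ and one must show $\pi$ is dominant and generically finite of nonzero degree. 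Here I see two routes. (i) Exhibit a \emph{seed}: a possibly special channel $H^0$ together with aligning beamformers $U^0_i,V^0_j$, and verify that the linearized constraint map
\[
L:(\dot U_i,\dot V_j)_{i,j}\longmapsto \bigl(\dot U_i^\ast H^0_{ij}V^0_j+(U^0_i)^\ast H^0_{ij}\dot V_j\bigr)_{i\neq j}
\]
on the tangent spaces of the Grassmannians is invertible; this is exactly the statement that $d\pi$ is surjective at the seed, which makes the seed an isolated reduced point of its fiber and forces $\pi$ to be dominant. (ii) Compute the degree of $\pi$ as the intersection number $\int_{\mathrm{Gr}(d,N)^{2K}}\prod_{i\neq j}c_{d^2}\bigl((S'_j)^\ast\otimes S_i^\ast\bigr)$, where $S_i,S'_j$ denote the tautological rank-$d$ sub-bundles, and show it is nonzero: for $d=1$ this is literally a count of perfect matchings in an explicit bipartite graph, equal to $2=\binom{2d}{d}$ when $K=3$ (recovering the eigenvector count of \cite{BT09}), and in general a Schubert-calculus positivity statement. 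Once $\pi$ is dominant, generic channels are feasible, which is the claim; the parity case $2N=d(K+1)+1$ is dispatched by an analogous, slightly enlarged construction.

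The hard part is precisely this seed/degree computation for general $K$. The three-user case is deceptively easy because each receiver has a single pair of interferers, so alignment collapses to a common-invariant-subspace (eigenvector) condition; with $K-1$ simultaneous interferers per receiver that structure disappears, and one must either engineer a sufficiently sparse or block-structured $H^0$ for which the $K(K-1)$ alignment conditions decouple enough that invertibility of $L$ can be checked directly, or carry out the Littlewood--Richardson bookkeeping certifying that the above product of Chern classes does not vanish in the cohomology of a product of $2K$ Grassmannians (in particular that no single Grassmannian factor overflows its dimension). I expect this combinatorial/intersection-theoretic positivity to be the crux, and route (ii) to yield the cleanest proof.
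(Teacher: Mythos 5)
Your setup and your necessity argument coincide with the paper's: the same incidence variety, the same vector-bundle fibration over the product of Grassmannians giving irreducibility and the dimension count $\dim\mathcal{V}=\dim\mathcal{H}+Kd(2N-(K+1)d)$, and the same conclusion that a negative excess forces the projection to miss a generic channel. Your route (i) for sufficiency is also exactly the paper's strategy in disguise: exhibiting one channel tuple $h$ and one strategy $s_0$ at which the linearized alignment map has full rank is equivalent, via the fiber-dimension theorem, to showing $\dim q^{-1}(h)\leq Kd(2N-(K+1)d)$ at $s_0$, which is the content of the paper's Lemma~5 and Lemma~6. (Your monotonicity reduction to the tight case $2N=d(K+1)$ is a valid extra step but unnecessary; the paper's construction handles all $N\geq d(K+1)/2$ at once, with the surplus coordinates simply left free.)

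The gap is that you never produce the seed, and you say so yourself: ``the hard part is precisely this seed/degree computation for general $K$.'' That computation is the entire technical content of the sufficiency proof, and it does not follow from soft arguments. The paper fills it with an explicit block construction: after normalizing $U_i,V_i$ to the span of the first $d$ coordinates, the off-diagonal blocks $F_{ij}$ and $G_{ij}$ of $H_{ij}$ are built from $d\times d$ blocks $A_{ij}(k)$ equal to $\mathrm{Id}_d$ exactly when $k\equiv i-j \pmod K$, so that each linearized equation $\bar V_i^T G_{ij}+F_{ij}\bar U_j=0$ isolates a distinct $d\times d$ block of variables and the $K(K-1)d^2$ equations visibly have full rank. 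Even this requires care: the clean cyclic assignment only works for odd $K$, and the even case needs an additional layer of half-size identity blocks $B_{ij}$ of sizes $\lceil d/2\rceil$ and $\lfloor d/2\rfloor$ to make the counting come out. Your route (ii) (nonvanishing of the Euler class $\prod_{i\neq j}c_{d^2}$ of the relevant bundle on $\mathrm{Gr}(d,N)^{2K}$) is a genuinely different and attractive idea --- it would even give feasibility for \emph{every} channel rather than a generic one --- but the required positivity is exactly the combinatorial statement you defer, and you verify it only for $K=3$, $d=1$. As written, the proposal identifies the correct reduction but leaves the theorem unproved.
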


The present paper is devoted to the proof of this result. We study the problem within the framework of algebraic geometry, and in fact prove a stronger statement on the dimension of the solution set.
Additionally, our upper bound on the dimension of the solution set applies to arbitrary system parameters and implies a necessary condition for interference alignment in this general setting. The proofs of these results are in Section~\ref{s:Proofs}.

Theorem~\ref{t:symmetricFeasibility} suggests an engineering interpretation for
the performance gain from increasing the number of antennas. Depending on
whether $N< d(K+1)/2$ or not, there are two types of performance benefit from
increasing $N$: (1) \emph{alignment gain} or (2) \emph{MIMO gain}. Suppose, for example, that
there are $K=5$ users. If $N=1$, i.e. there is only a single antenna at each
node, then no alignment can be done and only one user can communicate on a
single dimension, giving 1 total degree of freedom (dof). Increasing to $N=2$
antennas allows three users to communicate with one dimension each, giving a total normalized dof $Kd/N=3/2$ (we normalize the total dof by number of antennas $N$). Similarly, increasing to $N=3$ antennas allows all five users to communicate, giving normalized dof $=5/3$. Thus, each increase in $N$ until $N= d(K+1)/2=3$ leads to additional users able to transmit, and a gain of two dimensions per additional antenna; this is \emph{alignment gain}. From here, however, increasing $N$ has a different effect. If we double $N$ to $N=6$, there are still only $5$ users, and each can now transmit along $d=2$ dimensions instead of one, but the normalized dof remains at $5\cdot 2/6=5/3$. The total dof increases at a slower rate, not because more alignment is possible, but simply because more total dimensions are available; this is \emph{MIMO gain}. 

Theorem~\ref{t:symmetricFeasibility} implies that the number of transmit dimensions satisfies $d\leq \frac{2N}{K+1}$. The total normalized dof is therefore $Kd/N=
 \frac{K}{N}\left\lf\frac{2N}{K+1}  \right\rf\leq 2\frac K{K+1}$.
In sharp contrast to the $\frac K2$ total normalized dof achievable for infinitely many parallel channels in~\cite{CJ08}, for the MIMO case we see that at most 2 dof (normalized by the single-user performance of $N$ transmit dimensions) are achievable for any number of users $K$ and antennas $N$.

\subsection{Related work}
The problem we consider, of maximizing dof using vector space strategies for the $K$-user MIMO IC with finite number of transmit and receive antennas, has received significant attention in the last several years.
Cadambe and Jafar \cite{CJ08} considered the problem for $K=3$ users and $N=2$ antennas, and showed that $3/2$ dof was achievable. For more than $3$ users or $N>2$ they assumed an infinite number of parallel channels and applied their main $K/2$ result. Gomadam et al. \cite{GCJ08} posed the problem of determining feasibility of alignment, but left the problem unanswered and proposed a heuristic iterative numerical algorithm.

The main theoretical work to precede the present paper is by Yetis et al. \cite{YGJK10}. Considering the case of a single transmit dimension, $d=1$, they apply Bernstein's Theorem, which requires that each coefficient in a system of polynomial equations is chosen generically. They note that Bernstein's Theorem no longer applies in the case $d>1$, as the equations describing the problem become coupled and coefficients are repeated.
Our approach bypasses the difficulties posed by coupled equations, and thus,
unlike~\cite{YGJK10} our
results do not have the restriction that $d=1$.

Almost all other work has focused on various heuristic algorithms, mainly
iterative in nature (see~\cite{PH09}, \cite{RSL10}, and \cite{SGHP10}). Some have proofs of convergence, but performance guarantees are not available. Schmidt et al. \cite{SGHP10}, \cite{SUH10} study a refined version of the single-transmit dimension problem, where for the case that alignment is possible (as mentioned above, feasibility of alignment is known for the single-transmit case $d=1$) they attempt to choose a good solution among the many possible solutions. Papailiopoulos and Dimakis \cite{PD10} relax the problem of maximizing degrees of freedom to that of a constrained rank minimization and propose an iterative algorithm.

Recently we were notified of independent related work by Razaviyayn et al. \cite{RGL11}. They prove a necessary condition which corresponds to one of our necessary conditions in Theorem~\ref{t:upper-bound}, and they also have a sufficient condition for the special case $d_i=1$ for all $i$ and $M_i=M$, $N_i=N$. However, their necessary condition is not tight, and \cite{RGL11} does not have a sufficient condition for $d>1$ as we do in the present paper (in which we focus on the symmetric square case $M_i=N_i=N$). They have since extended \cite{RGL11-2} their achievability condition to the case where $d_i=d$, and $d$ divides $M_i$ and $N_i$ for each $i$. 

In a different direction of inquiry, Razaviyayn et al. \cite{RSL10} show that
checking the feasibility of alignment for general system parameters is NP-hard.
Note that their result is not in contradiction to ours, since our simple closed-form expression applies only to the fully symmetric case.

We emphasize that in this paper we restrict attention to vector space interference alignment,
where the effect of finite channel diversity can be observed. Interfering signals can also be aligned on the signal scale
using lattice codes (first proposed in~\cite{BPT10}, see also \cite{CJS09}, \cite{EO09}, \cite{MGMK09}), however the understanding of this type of alignment is currently at the stage corresponding to infinite parallel channels in the vector space setting. In other words, essentially ``perfect" alignment is possible due to the infinite channel precision available at infinite signal-to-noise ratios.


\subsection{Interference channel model}
There are $K$ transmitters and $K$ receivers, with transmitter $i$, $1\leq i\leq K$, having $M_i$ antennas and receiver $i$, $1\leq i\leq K$, having $N_i$ antennas. Each receiver $i$ wishes to obtain a message from the corresponding transmitter $i$. The remaining signals from transmitters $j\neq i$ are undesired interference.
The channel is assumed to be constant over time, and at each time-step the input-output relationship is given by
\begin{equation}
  y_i=H_{ii}x_i+\sum_{1\leq j\leq K\atop j\neq i}H_{ij}x_j+z_i\,,\quad 1\leq i\leq K\,.
\end{equation}
Here for each user $i$ we have $x_i\in \C^{M_i}$ and $y_i, z_i\in \C^{N_i}$, with $x_i$ the transmitted signal, $y_i$ the received signal,  and $z_i\sim \CN(0,I_{N_i})$ is additive isotropic white Gaussian noise. The channel matrices are given by $H_{ij}\in \C^{N_i\times M_j}$ for $1\leq i,j\leq K$, with each entry assumed to be independent and with a continuous distribution. We note that this last assumption on independence can be weakened significantly to a basic nondegeneracy condition but we will not pursue this here. For our purposes this means the channel matrices are generic. Each user has an average power constraint, $E(||x_i||^2)\leq P$.

\subsection{Vector space strategies and degrees-of-freedom}
We restrict the class of coding strategies to \emph{vector space} strategies.
In this context degrees-of-freedom (dof) has a simple interpretation as the dimensions of the transmit subspaces, described in the next paragraph. However, we note that one can more generally define the degrees-of-freedom region in terms of an appropriate high transmit-power limit $P\to \infty$ of the Shannon capacity region $C(P)$ normalized by $\log P$ (\cite{CJ08}, \cite{MMK08}). In that general framework, it is well-known and easy to show that vector space strategies give a concrete non-optimal achievable strategy with
rates $$R_i(P)=d_i\log(P)+O(1), \quad 1\leq i\leq K\,.$$ Here $d_i$ is the dimension of transmitter $i$'s subspace and $P$ is the transmit power.

The trasmitters encode their data using vector space precoding.
Suppose transmitter $j$ wishes to transmit a vector $\hat x_j \in \mathbb
C^{d_j}$ of
$d_j$ data symbols. These data symbols are modulated on the subspace
$U_j\subseteq \C^{M_j}$ of dimension~$d_j$, giving the input signal
$x_j=\tilde U_j\hat x_j$, where $\tilde U_j$ is a $M_j \times d_j$ matrix whose
column span is $U_j$. The signal~$x_j$ is received by receiver $i$ through the channel
as $H_{ij}U_j\hat x_j$ 
The dimension of the transmit
space, $d_j$, determines the number of data streams, or degrees-of-freedom,
available to transmitter $j$. With this restriction of strategies, the output is
given by
\begin{equation}
  y_i=H_{ii}\tilde U_i\hat x_i+\sum_{1\leq j\leq K\atop j\neq i}H_{ij}\tilde U_j\hat x_j+z_i\,,\quad 1\leq i\leq K\,.
\end{equation}
The desired signal space at receiver $i$ is thus $H_{ii}U_i$, while the
interference space is given by $\sum_{j\neq i}H_{ij}U_j$, i.e.\ the span of the undesired subspaces as observed by receiver~$i$.

In the regime of asymptotically high transmit powers, in order that decoding can be accomplished we impose the constraint at each receiver~$i$ that the desired signal space $H_{ii}U_i$ is complementary to the interference space $\sum_{j\neq i}H_{ij}U_j$.  Equivalently, there must exist subspaces $V_i$ with $\dim V_i=
\dim U_i$ such that
\begin{equation}\label{e:Generalorthogonality}
 H_{ij}U_j \perp V_i\,, \quad 1\leq i,j \leq K, \quad i\neq j\,,
\end{equation}
 and
\begin{equation}
\dim(\text{Proj}_{V_i}H_{ii}U_i)=\dim U_i\,.
\end{equation}
Here $H_{ij}U_j\perp V_i$ is interpreted to mean that $V_i$ belongs to the dual space $(\C^{N_i})^*$ and $V_i$ annihilates $H_{ij}U_j$. Alternatively, $(V_i)^\dagger H_{ij}U_j=0$, with $V^\dagger$ denoting the Hermitian transpose of $V$. Note that implicitly the transmit dimensions are assumed to satisfy the obvious inequality $d_i\leq \min(M_i,N_i)$.
 If each direct channel matrix $H_{ii}$ has generic (or i.i.d. continuously distributed) entries, then the second condition is satisfied assuming $\dim V_i=d_i$ for each $i$ (this can be easily justified---see \cite{GCJ08} for some brief remarks). Hence we focus on condition \eqref{e:Generalorthogonality}.

The goal is to maximize degrees of freedom, i.e.
choose subspaces $U_1,\dots,U_K$, $V_1,\dots,V_K$ with $d_i\leq\min(M_i,N_i)$ in order to
\begin{align*}
&\text{maximize}\quad d_1+d_2+\dots+d_K \\
&\text{subject to}\quad  H_{ij}U_j \perp V_i\,, \quad 1\leq i,j \leq K, \quad i\neq j\,,
\end{align*}
To this end, it is sufficient to answer the following feasibility question: given number of users $K$, number of antennas $M_1,\dots,M_K$, $N_1,\dots,N_K$, and desired transmit subspace dimensions $d_1,\dots,d_K$, does there exist a choice of subspaces $U_1,\dots,U_K$ and $V_1,\dots,V_K$ with $\dim U_i=\dim V_i=d_i,1\leq i\leq K$, satisfying \eqref{e:Generalorthogonality}?




\subsection{Main results}\label{ss:MainResults}
Our first result contains two parts: (1) it gives the dimension of the variety of solutions in the case that it is generically nonempty, and (2) it gives a necessary condition for feasibility of alignment.

\begin{theorem}\label{t:upper-bound}
Fix an integer $K$ and integers $d_i$, $M_i$, and~$N_i$ for $1 \leq i \leq K$.
For generic channel matrices $H_{ij}$, if there is a feasible strategy, then the
dimension of the variety of such strategies is
\begin{equation*}
\sum_{i=1}^K \big(d_i(N_i - d_i) +  d_i(M_i - d_i)\big)
- \sum_{1\leq i, j\leq K\atop j\neq i} d_i d_j.
\end{equation*}
In particular, if this quantity is negative, then there are no feasible
strategies.

Moreover, in order for there to be feasible solutions, it is necessary that
$$
d_i\leq \min(M_i,N_i)\,,
$$ as well as that
\begin{equation}\label{eq:nec-compl}
d_i + d_j \leq \max\{N_i, M_i\} \qquad
\mbox{for all } 1 \leq i \neq j \leq K,
\end{equation}
and the quantity
\begin{equation}\label{eq:nec-f-A}
t_A := \sum_{i \in A} \big(d_i (N_i - d_i) + d_i (M_i - d_i) \big)
- \sum_{i, j \in A \atop i \neq j} d_i d_j
\end{equation}
is non-negative for all subsets $A \subset \{1, \ldots, K\}$.
\end{theorem}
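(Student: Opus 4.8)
The plan is to realize the set of feasible strategies, as the channel varies, as a single family --- the fibers of one projection from an incidence variety --- and then to extract the dimension by a parameter count. Write $\mathcal{X} := \prod_{i=1}^{K}\mathrm{Gr}(d_i,M_i)\times\prod_{i=1}^{K}\mathrm{Gr}(d_i,N_i)$ for the space of candidate subspace tuples $\big((U_i),(V_i)\big)$; it is nonempty exactly when $d_i\le\min(M_i,N_i)$ for every $i$, which already gives the first asserted necessary condition, and $\dim\mathcal{X}=\sum_i\big(d_i(M_i-d_i)+d_i(N_i-d_i)\big)$. Let $\mathcal{H}:=\prod_{i\ne j}\C^{N_i\times M_j}$ be the space of off-diagonal channel tuples (the direct channels $H_{ii}$ do not appear in \eqref{e:Generalorthogonality}), and form
$$
\mathcal{Z}:=\Big\{\big((U_i),(V_i),H\big)\in\mathcal{X}\times\mathcal{H}\ :\ V_i^{\dagger}H_{ij}U_j=0\ \text{ for all } i\ne j\Big\}.
$$
The first step is to project $\mathcal{Z}$ to $\mathcal{X}$ and observe that the fiber over a fixed $\big((U_i),(V_i)\big)$ is the kernel of the linear evaluation map $H\mapsto\big(V_i^{\dagger}H_{ij}U_j\big)_{i\ne j}$ valued in $\prod_{i\ne j}\C^{d_i\times d_j}$. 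Since the $U_j,V_i$ have full ranks $d_j,d_i$ and the matrix blocks $H_{ij}$ are disjoint, this map is surjective, so every such fiber is a linear subspace of $\mathcal{H}$ of codimension \emph{exactly} $\sum_{i\ne j}d_id_j$. Hence $\mathcal{Z}$ is the total space of a vector bundle of constant rank over the irreducible variety $\mathcal{X}$, so it is irreducible with
$$
\dim\mathcal{Z}=\dim\mathcal{X}+\dim\mathcal{H}-\sum_{i\ne j}d_id_j .
$$

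Next I would examine the projection $\pi:\mathcal{Z}\to\mathcal{H}$, noting that $\pi^{-1}(H)$ is precisely the variety of feasible strategies for the channel $H$. If a feasible strategy exists for generic $H$, then $\pi$ is dominant, and the standard result on the dimension of the fibers of a dominant morphism of irreducible varieties shows that over a dense open subset of $\mathcal{H}$ the fiber $\pi^{-1}(H)$ has dimension $\dim\mathcal{Z}-\dim\mathcal{H}=\sum_i\big(d_i(N_i-d_i)+d_i(M_i-d_i)\big)-\sum_{i\ne j}d_id_j$, which is the claimed value. Conversely, if that quantity is negative then $\dim\mathcal{Z}<\dim\mathcal{H}$, so $\overline{\pi(\mathcal{Z})}$ is a proper closed subvariety of $\mathcal{H}$; a generic channel lies outside it and admits no feasible strategy, which is the ``in particular'' clause.

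The two further necessary conditions come from applying this machinery to sub-configurations. For $t_A\ge 0$: keep only the users in $A$, a genuine alignment problem with the same $M_i,N_i,d_i$. A feasible strategy for the whole system restricts to one for this sub-system, and a generic channel tuple in $\mathcal{H}$ restricts, under the coordinate projection $\mathcal{H}\to\prod_{i,j\in A,\,i\ne j}\C^{N_i\times M_j}$, to a generic channel tuple for the sub-system; so if $t_A<0$ the sub-system would be infeasible for generic channels, hence so would the whole system, a contradiction. The complementarity condition \eqref{eq:nec-compl} involves only one pair of users: the relation $V_i^{\dagger}H_{ij}U_j=0$ says that the $d_i$-dimensional space $V_i$ and the space $H_{ij}U_j$ are orthogonal in $\C^{N_i}$, and a rank count on the generic matrix $H_{ij}$ --- whose rank is $\min(N_i,M_j)$, so $\dim H_{ij}U_j$ can be pushed below $d_j$ only by intersecting $U_j$ with $\ker H_{ij}$ --- shows that two such orthogonal subspaces coexist only when the dimension bound \eqref{eq:nec-compl} holds.

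The difficulties in this theorem are organizational rather than structural. One must check that the evaluation map is honestly surjective and block-diagonal, so that the fibers over $\mathcal{X}$ have codimension \emph{exactly} $\sum_{i\ne j}d_id_j$ and not merely at most that; one must confirm irreducibility of $\mathcal{X}$ (a product of Grassmannians over $\C$) and hence of $\mathcal{Z}$; and one must ensure that ``generic'' (the complement of a nonzero polynomial, matching the hypothesis on the channel entries) is preserved under the coordinate projections in the $t_A$ argument and under passage to fibers. I do not expect a serious obstacle here; the place to be most careful is the bridge between the abstract incidence variety and the explicit coordinates in which the fiber computation is actually run --- full-rank matrix representatives modulo $\mathrm{GL}_{d_i}$, or standard affine charts of the Grassmannians --- and making sure the dimension-of-fibers statement is invoked in a form that yields $\dim\pi^{-1}(H)=$ the stated formula for generic $H$. (The genuinely hard part of the paper, that $\pi$ is actually dominant when $N\ge d(K+1)/2$ in the symmetric square case --- so that these necessary conditions become sufficient, Theorem~\ref{t:symmetricFeasibility} --- is a separate matter not needed for Theorem~\ref{t:upper-bound}.)
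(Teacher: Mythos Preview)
Your proposal is correct and follows essentially the same architecture as the paper: form the incidence variety inside $\mathcal{X}\times\mathcal{H}$, project first to $\mathcal{X}$ to see a vector bundle of codimension $\sum_{i\ne j}d_id_j$ (the paper does this via an explicit coordinate computation, you via surjectivity of the evaluation map, but the content is the same), then project to $\mathcal{H}$ and apply the dimension-of-fibers theorem; the subset condition $t_A\ge 0$ is handled by restriction, and \eqref{eq:nec-compl} by a rank argument on a single generic $H_{ij}$. Your treatment of \eqref{eq:nec-compl} is a bit sketchier than the paper's explicit two-case argument (splitting on whether $N_i\ge M_j$ or $M_j>N_i$), but it points at the same computation.
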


Theorem~\ref{t:upper-bound} gives
a necessary condition on the existence of solution strategies. In the symmetric
case, this is also a sufficient condition.

\begin{theorem}\label{t:symmetric}
Suppose that $K \geq 3$ and furthermore that $d_i = d$ and $M_i = N_i = N$ for
all users $i$. Then, for generic channel matrices, the space of feasible strategies is
non-empty and has dimension $Kd(2N-(K+1)d)$, if this quantity is non-negative,
and is empty if it is negative.
\end{theorem}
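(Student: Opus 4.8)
The statement that the space of feasible strategies has dimension $Kd(2N-(K+1)d)$ when this number is non-negative, and is empty when it is negative, is immediate from Theorem~\ref{t:upper-bound}: substituting $d_i=d$ and $M_i=N_i=N$ into the dimension formula, and into~\eqref{eq:nec-f-A} with $A=\{1,\dots,K\}$, both give $Kd(2N-(K+1)d)$. So the entire content of the theorem is the \emph{achievability} claim: when $2N\geq(K+1)d$, a feasible strategy exists for generic channel matrices.

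The plan is to prove this by a tangent-space argument on an incidence variety. Consider
\[
\ZZ=\bigl\{(H,U_\bullet,V_\bullet)\ :\ V_i^\dagger H_{ij}U_j=0\ \text{ for all }\ i\neq j\bigr\}\ \subseteq\ \{H\}\times\bigl(\mathrm{Gr}(d,N)\times\mathrm{Gr}(d,N)\bigr)^{K},
\]
with projections $\rho\colon\ZZ\to\mathrm{Gr}(d,N)^{2K}$ and $\pi\colon\ZZ\to\{H\}\cong\C^{KN^2}$. For every fixed choice of subspaces, the equation $V_i^\dagger H_{ij}U_j=0$ is exactly $d^2$ linearly independent linear conditions on the block $H_{ij}$ (since $U_j,V_i$ have full rank $d$), so $\rho$ is an affine bundle and $\ZZ$ is irreducible and smooth of dimension $2Kd(N-d)+KN^2-K(K-1)d^2=KN^2+Kd(2N-(K+1)d)$. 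Since $\mathrm{Gr}(d,N)$ is projective, $\pi$ is proper, hence has closed image; so it suffices to show $\pi$ is dominant (this will in fact force $\pi$ surjective, i.e.\ feasibility for \emph{all} $H$). As $\ZZ$ is smooth with $\dim\ZZ\geq KN^2$ under the hypothesis, it is enough to exhibit one point $z_0=(H_0,U_\bullet^0,V_\bullet^0)\in\ZZ$ at which $d\pi$ is surjective. Unwinding $T_{z_0}\ZZ$, this is equivalent to producing $z_0\in\ZZ$ for which the linear ``Jacobian'' map on $\bigoplus_j T_{U_j^0}\mathrm{Gr}(d,N)\oplus\bigoplus_i T_{V_i^0}\mathrm{Gr}(d,N)$ sending $(\dot U,\dot V)$ to $\bigl(\dot V_i^{\dagger}H_{0,ij}U_j^0+(V_i^0)^{\dagger}H_{0,ij}\dot U_j\bigr)_{i\neq j}\in\bigoplus_{i\neq j}\mathrm{Mat}_{d\times d}(\C)$ is surjective. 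Source and target have dimensions $2Kd(N-d)$ and $K(K-1)d^2$, and surjectivity requires $2Kd(N-d)\geq K(K-1)d^2$, which is \emph{exactly} $2N\geq(K+1)d$; thus at the boundary the construction must use up essentially all available tangent directions.

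Several reductions should cut down the search for $z_0$. (i) Monotonicity in $N$: embedding $\C^N\hookrightarrow\C^{N+1}$ in the first $N$ coordinates carries a feasible configuration for a generic $N\times N$ channel to one for the $(N+1)\times(N+1)$ channel whose top-left block it is, and the top-left block of a generic matrix is generic; so it suffices to treat $N=\lceil(K+1)d/2\rceil$. (ii) A ``peeling'' reduction in $K$: fixing $U_K^0,V_K^0$ to generic $d$-dimensional subspaces forces $V_i^0\subseteq(H_{iK}U_K^0)^{\perp}$ and $U_j^0\subseteq(H_{Kj}^{\dagger}V_K^0)^{\perp}$ for $i,j<K$, and the residual constraints among users $1,\dots,K-1$ are precisely those of a $(K-1)$-user, $(N-d)$-antenna, $d$-stream problem with (again generic) channels $P_{(H_{iK}U_K^0)^{\perp}}H_{ij}\big|_{(H_{Kj}^{\dagger}V_K^0)^{\perp}}$; this shows $(K,N,d)$ is generically feasible whenever $2N\geq(K+2)d$ and $(K-1,N-d,d)$ is. (iii) Base case $K=3$: at $N=2d$ one forces $H_{13}U_3=H_{12}U_2$, $H_{21}U_1=H_{23}U_3$, $H_{32}U_2=H_{31}U_1$, which reduces feasibility to the requirement that $U_1$ be a $d$-dimensional invariant subspace of a fixed product of channel matrices and inverses on $\C^{2d}$ --- the span of any $d$ eigenvectors works --- and then all $N\geq 2d$ follow from (i). Together these handle the ``comfortable'' range $2N\geq(K+2)d$ and the case $K=3$, leaving the \emph{critical} instances $N=\lceil(K+1)d/2\rceil$ with $d\geq2$ (and those produced from them by one round of peeling), which is where I expect the real work to lie.

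The crux, then, is the explicit construction of a single good point $z_0$ for the critical (tight) instances together with the verification that the Jacobian map above has full rank there. The approach I would take: let each $U_j^0$ and $V_i^0$ be the coordinate subspace attached to a carefully chosen $d$-element set $S_j\subseteq\{1,\dots,N\}$, resp.\ $T_i\subseteq\{1,\dots,N\}$, and let $H_0$ have a sparse ``staircase'' support (with the $T_i\times S_j$ block of $H_{0,ij}$ identically zero, as forced by $z_0\in\ZZ$); in these coordinates the $(i,j)$ component of the Jacobian reads $\widetilde Y_i^{\,\dagger}(H_{0,ij})_{\overline{T_i},S_j}+(H_{0,ij})_{T_i,\overline{S_j}}\widetilde X_j$, and the goal is to choose the combinatorial design of the $S_j,T_i$ and the remaining nonzero entries of $H_0$ so that, in a suitable ordering of the coordinate entries of the $\widetilde X_j,\widetilde Y_i$ against the entries of the target blocks, the matrix of $\Psi$ becomes block-triangular with invertible diagonal blocks. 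Because the two sides have (at the boundary) equal dimension, this amounts to a tight bipartite assignment: each block $(i,j)$, with its $d^2$ target coordinates, must be matched to $d^2$ tangent coordinates drawn from the pools $\widetilde X_j$ and $\widetilde Y_i$ (each of size $d(N-d)$), with the matchings for distinct blocks pairwise disjoint. Constructing such an assignment and checking invertibility of the resulting triangular system is the main obstacle; the remaining items --- irreducibility and smoothness of $\ZZ$, properness of $\pi$, the implication ``$d\pi$ surjective at $z_0$'' $\Rightarrow$ ``$\pi$ dominant'' $\Rightarrow$ (by properness) ``$\pi$ surjective'', and the final dimension count via Theorem~\ref{t:upper-bound} --- are routine.
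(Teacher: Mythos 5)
Your framework is essentially the one the paper uses: the paper likewise reduces achievability to exhibiting a single channel tuple $h$ at which the linearized system $\bar V_i^T G_{ij} + F_{ij}\bar U_j = 0$ (the Zariski cotangent space of the fiber $q^{-1}(h)$ at a chosen strategy $s_0$) has dimension exactly $Kd(2N-(K+1)d)$, and then invokes Lemma~\ref{l:ZariskiCotDim2} together with Theorem~\ref{t:fibers} to conclude dominance; your ``$d\pi$ surjective at $z_0$'' criterion is the dual formulation of the same rank computation, and your remarks on irreducibility of the incidence variety and the dimension count via Theorem~\ref{t:upper-bound} reproduce Lemma~\ref{l:dim-x}. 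Your three reductions (monotonicity in $N$, peeling off one user, the $K=3$ eigenvector argument) are sound in outline and are not in the paper, but, as you note yourself, they do not reach the critical instances: at $N=\lceil (K+1)d/2\rceil$ the peeling hypothesis $2N\geq (K+2)d$ fails (and peeling from the comfortable range only lands you on the tight instance for $K-1$ users), so for every $K\geq 4$ with $d\geq 2$ a tight instance survives all three reductions.

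That surviving case is the entire mathematical content of the theorem, and your proposal stops exactly there: ``constructing such an assignment and checking invertibility of the resulting triangular system is the main obstacle'' is an accurate description of the paper's proof, not a substitute for it. The paper resolves it with an explicit choice of the blocks $F_{ij}, G_{ij}$: for odd $K$, a circulant design $A_{ij}(k)=\Id_d$ iff $k\equiv i-j \pmod K$, which assigns to each constraint block $(i,j)$ a whole $d\times d$ slab of either $\bar U_j$ or $\bar V_i$, with the $(K-1)/2$ slabs in each pool used exactly once. For even $K$ your proposed ``tight bipartite assignment'' at block granularity $d\times d$ provably does not exist, since each pool has $d(N-d)=(K-1)d^2/2$ coordinates, which is not a multiple of $d^2$; the paper has to split blocks into pieces of sizes $\lceil d/2\rceil$ and $\lfloor d/2\rfloor$ (the matrices $B_{ij}$ of \eqref{e:Bij}) and verify the resulting coupled linear system by hand. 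So: correct strategy, useful reductions, but the key construction --- which is where all the work lies, and which the even-$K$ half-block phenomenon shows is genuinely delicate --- is missing.
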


We emphasize that both of these theorems apply only to generic
matrices. This means that there exists an open dense subset of the space of
matrices (in fact the complement of an algebraic hypersurface) on which these
statements hold. In particular, matrices chosen from a non-singular probability
distribution will be sufficiently generic with probability one. On the other
hand, specific matrices, such as $H_{ij} = 0$ for $i \neq j$, may lead to
different solutions.

\begin{corollary}[Symmetric achievable dof]
Under the conditions of Theorem~\ref{t:symmetric}, the maximum normalized dof is given by $$\text{max dof} = \frac{K}{N}\left\lf\frac{2N}{K+1}  \right\rf\leq 2\frac K{K+1}\,.$$
\end{corollary}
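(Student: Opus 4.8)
The plan is to read the corollary off directly from Theorem~\ref{t:symmetric}. Under the symmetric hypotheses $d_i=d$ and $M_i=N_i=N$, the total number of degrees of freedom of a feasible strategy is $\sum_{i=1}^K d_i = Kd$, so determining the maximum dof reduces to finding the largest integer $d\geq 0$ for which a feasible strategy exists. First I would invoke Theorem~\ref{t:symmetric}: for $K\geq 3$ and generic channel matrices, the space of feasible strategies is non-empty exactly when $Kd(2N-(K+1)d)\geq 0$. Since $K\geq 3$ and $d\geq 0$, the factor $Kd$ is non-negative, so this condition is equivalent to $2N-(K+1)d\geq 0$, i.e.\ $d\leq \tfrac{2N}{K+1}$.

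Next I would conclude that the maximal feasible transmit dimension is $d^\star=\big\lfloor \tfrac{2N}{K+1}\big\rfloor$, so the maximal total dof is $Kd^\star$ and, normalizing by the single-user value $N$, the maximum normalized dof equals $\tfrac{K}{N}\big\lfloor \tfrac{2N}{K+1}\big\rfloor$, which is the claimed formula. I should also check that $d^\star$ respects the standing constraint $d\leq\min(M_i,N_i)=N$ built into the problem; this is automatic, since for $K\geq 3$ we have $d^\star\leq \tfrac{2N}{K+1}\leq \tfrac{N}{2}\leq N$ (and $d^\star=0$ when $2N<K+1$, which simply says no user can transmit).

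Finally, the inequality $\tfrac{K}{N}\big\lfloor \tfrac{2N}{K+1}\big\rfloor\leq \tfrac{2K}{K+1}$ follows by discarding the floor: $\big\lfloor \tfrac{2N}{K+1}\big\rfloor\leq \tfrac{2N}{K+1}$, hence $\tfrac{K}{N}\big\lfloor \tfrac{2N}{K+1}\big\rfloor\leq \tfrac{K}{N}\cdot\tfrac{2N}{K+1}=\tfrac{2K}{K+1}$, and the latter is strictly less than $2$ for every $K$. There is essentially no hard step here, since all the content is in Theorem~\ref{t:symmetric}; the only points requiring care are that the genericity and $K\geq 3$ hypotheses are inherited from that theorem, and that the sign condition $Kd(2N-(K+1)d)\geq 0$ must be translated into the clean linear bound $d\leq 2N/(K+1)$ before taking the integer part.
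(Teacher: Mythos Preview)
Your proposal is correct and matches the paper's approach: the corollary is stated without a formal proof, but the derivation is sketched in the introduction exactly as you do --- Theorem~\ref{t:symmetric} gives $d\leq 2N/(K+1)$, so the largest feasible integer is $d^\star=\lfloor 2N/(K+1)\rfloor$, and normalizing $Kd^\star$ by $N$ yields the stated formula; the inequality then follows by dropping the floor. Your additional sanity check that $d^\star\leq N$ is a nice touch the paper leaves implicit.
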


\begin{remark}In sharp contrast to the $\frac K2$ result for parallel channels in \cite{CJ08}, for the MIMO case at most 2 dof (normalized by the single-user performance of $N$ transmit dimensions) are achievable.
\end{remark}




%




\section{Proof of main results}\label{s:Proofs}

Some concepts from algebraic geometry will be necessary. For background see the classic reference by Hartshorne \cite{Hartshorne} or the more accessible introduction by Shafarevich \cite{Shaf}.

In algebraic geometry, the basic object of study is the solution set to a system of polynomial equations, called an \emph{algebraic variety} or simply \emph{variety}. The \emph{Zariski topology} is defined by taking the closed sets to be the set of solutions to a system of polynomial equations. Any future reference to closed or open sets is with respect to the Zariski topology. A variety $X$ is \emph{reducible} if it can be written as a union of non-trivial subvarieties $X=X_1\cup X_2$, where $X_1,X_2\neq X$ and $X_1,X_2\neq \varnothing$. A closed set $X$ which is not reducible is \emph{irreducible}. The constituent subsets $X_1,\dots,X_n$ in an irreducible decomposition $X=X_1\cup X_2 \cup \dots \cup X_n$ are called the \emph{components} of $X$. The \emph{dimension} of an irreducible variety $X$ is defined to be the maximum $n$ such that there is a chain of irreducible varieties $Y_0,Y_1,\dots, Y_{n-1}$ satisfying the strict inclusions $\varnothing\subsetneq Y_0\subsetneq Y_1\subsetneq \cdots Y_{n-1}\subsetneq X$.

To represent the strategy space, we will be interested in the \emph{Grassmannian} $G(d,N)$ of $d$-dimensional subspaces of $N$-dimensional affine space $\C^N$. See \cite{Shaf} for more on Grassmannians.
In particular, for each~$i$, the transmit subspace~$U_i$ corresponds to a point
in the Grassmannian, $U_i\in G(d_i,M_i)$, and similarly $V_i\in G(d_i,N_i)$.
The strategy space is thus the product of the Grassmannians,
$\calS=\prod_{i=1}^KG(d_i,M_i) \prod_{i=1}^K G(d_i, N_i)$. Let $\HH=\prod_{i\neq
j} \C^{N_i\times M_i}$ denote the space of all cross channels $H_{ij}$ for
$i\neq j$. Concretely, $h\in \HH$ is a length-$K(K-1)$ tuple of channel matrices $h=(H_{12},H_{13},\dots,H_{K,K-1})$. In the product $\mathcal S \times \mathcal H$, define the incidence variety $\mathcal I\subseteq \calS\times \HH$
to be the set of $(s, h)$ such that $s$ is a feasible strategy for $h$.
Each of $\calS$, $\HH$, and $\mathcal I$ is an algebraic variety.

The following theorem can be thought of as the algebraic geometry analogue of
the rank-nullity theorem from linear algebra (see e.g. Theorem~7 on page~76 of \cite{Shaf}). Given a map
$f\colon X\to Y$, the \emph{fiber} of a point $y$ in $Y$ is the inverse
image of $y$ under the map $f$,  $f\inv(y)=\{x\in X:f(x)=y)\}$. A
\emph{polynomial map} is simply a map whose coordinates are given by polynomials.
\begin{theorem}[Dimension of fibers]\label{t:fibers}
Let $f\colon X\to Y$ be a polynomial map between irreducible varieties. Suppose that $f$ is dominant, i.e.\ the image of $f$ is dense in $Y$. Let $n$ and~$m$ denote the dimensions of $X$ and~$Y$ respectively. Then $m\leq n$ and
\begin{enumerate}
\item $\dim Z\geq n-m$ for any $y\in f(X) \subset Y$ and for any component~$Z$ of the fiber $f\inv (y)$;
\item there exists a nonempty open subset $U\subset Y$ such that $\dim f\inv (y)=n-m$ for $y\in U$.
\end{enumerate}
\end{theorem}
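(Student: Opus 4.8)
Since this is a standard fact of algebraic geometry (it is proved in \cite{Shaf}), I will only sketch the argument I would give. The plan is to deduce everything from two classical ingredients: Krull's principal ideal theorem, which says that imposing one polynomial equation on an irreducible affine variety lowers the dimension of every component by at most one, and the behaviour of transcendence degree under field extension, which controls the situation over a generic point of $Y$. The inequality $m\le n$ is immediate: dominance of $f$ gives an injection of function fields $\C(Y)\hookrightarrow\C(X)$ (pullback of rational functions), so $m=\operatorname{trdeg}_\C\C(Y)\le\operatorname{trdeg}_\C\C(X)=n$.

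For part~(1), I would first reduce to the case $Y=\A^m$. Replacing $X$ by an affine open $X_0$ meeting the component $Z$ in a point lying on no other component of $f\inv(y)$, and $Y$ by an affine open $Y_0\ni y$, changes neither $\dim Z$ nor $n$, and $Z\cap X_0$ stays a component of the fibre of $X_0\to Y_0$ over $y$. Noether normalization gives a finite surjection $\pi\colon Y_0\to\A^m$; with $b=\pi(y)$, the fibre of $\pi\circ f$ over $b$ is the finite disjoint union of the $f\inv(y')$, $y'\in\pi\inv(b)$, one of which is $f\inv(y)$. As $\{b\}$ is the zero set of $m$ coordinate functions on $\A^m$, this fibre is cut out in the irreducible affine variety $X_0$ by $m$ regular functions, so Krull's theorem applied $m$ times shows every component, hence every component of $f\inv(y)$, has dimension at least $n-m$. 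In particular $n-m<0$ forces $f\inv(y)=\varnothing$, which is the ``negative dimension'' obstruction used in Theorem~\ref{t:upper-bound}.

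For part~(2), I would additionally produce an open set on which $\dim f\inv(y)\le n-m$; with part~(1) this gives equality. Set $r=n-m$, shrink $Y$ to an affine open $Y_0$ and $X$ to a dense affine open $X_0$ over it (preserving dominance, and not altering fibres over $Y_0$), and put $A=\C[Y_0]\hookrightarrow B=\C[X_0]$. Pick $t_1,\dots,t_r\in B$ algebraically independent over $\operatorname{Frac}(A)$ with $\operatorname{Frac}(B)$ finite over $\operatorname{Frac}(A)(t_1,\dots,t_r)$; clearing denominators gives $0\ne g\in A[t_1,\dots,t_r]$ with $B[1/g]$ finite as a module over $A[t_1,\dots,t_r][1/g]$. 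Geometrically, over the locus $g\ne0$ the map $f$ factors as a finite surjection onto an open subset of $Y_0\times\A^r$ followed by the projection to $Y_0$; hence for $y$ outside a proper closed subset of $Y_0$ the part of $f\inv(y)$ where $g\ne0$ maps with finite fibres onto a nonempty open subset of $\{y\}\times\A^r$ and so has dimension $\le r$. The remaining pieces, $f\inv(y)\cap V(g)$ and $f\inv(y)\cap(X\setminus X_0)$, lie on varieties of dimension $<n$; running an induction on $\dim X$ over their irreducible components (a component dominating $Y$ has generic fibre of dimension $\le(n-1)-m=r-1$ by the inductive case of part~(2), and a generic $y$ avoids the image of a non-dominating component) bounds their generic fibres by $r$ as well. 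Intersecting the finitely many dense opens obtained, and then with the dense constructible set $f(X)$, produces the open set $U$.

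I expect part~(2) to be where the real difficulty lies: a priori the fibre dimension could jump upward over a dense subset of $Y$, and excluding this is exactly what the induction does, by isolating the hypersurface $V(g)$ on which the finite factorization degenerates---this hypersurface has strictly smaller dimension, so its exceptional fibres fall under the inductive hypothesis and the images of its non-dominating components are simply discarded. (One could instead phrase the proof as upper semicontinuity of $x\mapsto\dim_x f\inv(f(x))$ on $X$, i.e.\ Chevalley's theorem on fibre dimension, but that statement requires essentially the same argument.) The reductions to affine opens and the Noether normalization in part~(1) are routine bookkeeping, provided they are arranged so that neither the ambient dimension nor the component being followed is disturbed.
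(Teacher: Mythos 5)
The paper does not prove this theorem at all: it is quoted as a classical fact with a pointer to Theorem~7 on page~76 of \cite{Shaf}, and that cited proof is exactly the argument you sketch (Krull's principal ideal theorem after reducing to $Y=\A^m$ for the lower bound, and finiteness over $\operatorname{Spec} A[t_1,\dots,t_r][1/g]$ plus induction on $\dim X$ for the generic upper bound). Your sketch is correct and is essentially the same approach as the source the paper relies on, including the one genuinely delicate point --- ruling out an upward jump of fibre dimension on a dense set in part~(2) --- so there is nothing to add.
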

We will apply this theorem to the projections of $\II$ to each of the factors
$\mathcal S$ and~$\mathcal H$.

\begin{lemma}\label{l:dim-x}
$\mathcal I$ is an irreducible variety of dimension
\begin{equation*}
\sum_{i=1}^K \big(d_i (M_i - d_i) + d_i(N_i - d_i)\big)
+ \sum_{1 \leq i,j \leq K \atop i \neq j} (M_i N_j - d_i d_j)
\end{equation*}
\end{lemma}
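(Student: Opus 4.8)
The plan is to analyze $\mathcal I$ through its first projection $\pi\colon\mathcal I\to\calS$, $(s,h)\mapsto s$, and to show that $\pi$ exhibits $\mathcal I$ as the total space of a vector bundle over $\calS$. The crucial point is that once the strategy $s=(U_1,\dots,U_K,V_1,\dots,V_K)$ is fixed, the feasibility conditions are \emph{linear} in the channel matrices: for $i\neq j$, condition \eqref{e:Generalorthogonality} — that $V_i$ annihilates $H_{ij}U_j$ — says exactly that $H_{ij}$ carries the $d_j$-dimensional subspace $U_j$ into the annihilator $V_i^{\circ}\subseteq\C^{N_i}$ of $V_i$, a subspace of dimension $N_i-d_i$; this is a system of homogeneous linear equations on $H_{ij}\in\C^{N_i\times M_j}$.

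First I would recall that $\calS=\prod_{i=1}^K G(d_i,M_i)\times\prod_{i=1}^K G(d_i,N_i)$ is irreducible: each Grassmannian $G(d,N)$ is irreducible (it is covered by the standard affine charts, each isomorphic to $\C^{d(N-d)}$), and a finite product of irreducible varieties over $\C$ is irreducible; hence $\dim\calS=\sum_{i=1}^K\big(d_i(M_i-d_i)+d_i(N_i-d_i)\big)$. The projection $\pi$ is surjective onto $\calS$, since the all-zero channel tuple is feasible for every strategy.

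Next I would compute the fibers. Fix $s\in\calS$. For each ordered pair $(i,j)$ with $i\neq j$, the set $\{H\in\mathrm{Hom}(\C^{M_j},\C^{N_i}):H(U_j)\subseteq V_i^{\circ}\}$ is the kernel of the linear map $\mathrm{Hom}(\C^{M_j},\C^{N_i})\to\mathrm{Hom}(U_j,\C^{N_i}/V_i^{\circ})$ induced by restriction and projection; choosing bases one sees this map is surjective, so the kernel has dimension $N_iM_j-d_id_j$. Intersecting over all $K(K-1)$ pairs, $\pi^{-1}(s)$ is a linear subspace of $\HH$ of dimension $e:=\sum_{i\neq j}(N_iM_j-d_id_j)=\sum_{i\neq j}(M_iN_j-d_id_j)$, and this dimension is the same for every $s$.

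It remains to deduce irreducibility and the dimension of $\mathcal I$. Because the constraint maps above are given by polynomials in $s$ and always have the same rank $d_id_j$, their common kernel varies algebraically and $\pi$ is Zariski-locally trivial: over each standard affine chart $W$ of $\calS$ one has $\pi^{-1}(W)\cong W\times\C^{e}$. Each such piece is irreducible of dimension $\dim\calS+e$; as they cover $\mathcal I$ and intersect pairwise (since $\calS$ is irreducible), $\mathcal I$ is irreducible of dimension $\dim\calS+e$, which is precisely the formula in the statement. This lemma is not hard, and the only place demanding a little care is this last bookkeeping step — passing from constant fiber dimension to a genuine locally trivial fibration — which is legitimate here exactly because the rank of the constraint map is constant. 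One could equally well sidestep vector bundles and invoke the standard companion to Theorem~\ref{t:fibers}: a surjective morphism onto an irreducible variety all of whose fibers are irreducible of one fixed dimension $e$ has irreducible source of dimension $\dim(\text{base})+e$.
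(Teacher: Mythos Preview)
Your proof is correct and follows essentially the same approach as the paper: project $\mathcal I$ onto $\calS$, observe that each fiber is a linear subspace of $\HH$ of constant dimension $\sum_{i\neq j}(N_iM_j-d_id_j)$, and conclude that $\mathcal I$ is a vector bundle over the irreducible base $\calS$, hence irreducible of the stated dimension. The only cosmetic difference is that you compute the fiber dimension via the coordinate-free kernel description $\mathrm{Hom}(\C^{M_j},\C^{N_i})\to\mathrm{Hom}(U_j,\C^{N_i}/V_i^{\circ})$, whereas the paper writes out the orthogonality equations in explicit local coordinates to exhibit the $d_id_j$ independent linear constraints on $H_{ij}$.
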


\begin{proof}
First, we consider the projection
onto the first factor of our incidence variety,
$p\colon\mathcal I \rightarrow \mathcal S$.
For any point $s= (U_1, \ldots,
U_K, V_1, \ldots, V_K)\in \calS$, we claim that the fiber $p\inv (s)$ is  a
linear space of dimension
\begin{equation*}
\dim p\inv (s)=\sum_{1 \leq i, j\leq K \atop i \neq j} M_i N_j - d_i d_j.
\end{equation*}

To see this claim, we give local coordinates to each of the subspaces comprising the solution $s\in \calS$. We write $u^{(i)}_a$ for the $a$th basis element of subspace $U_i$, where $u^{(i)}_a$ has zeros in the first $d_i$ entries except for a 1 in the $a$th entry, and similarly for $v^{(j)}_b$ (this is without loss of generality). 
The orthogonality condition $V_j\perp H_{ji}U_i$ can now be written as the condition $v^{(j)}_b\perp H_{ji} u^{(i)}_a$ for each $1\leq a\leq d_i$ and $1\leq b\leq d_j$. Writing this out explicitly, we obtain
\begin{align*}
0&=v^{(j)}_b\perp H_{ji} u^{(i)}_a
\\&= \sum_{1\leq k\leq M_i\atop 1\leq l\leq N_j} v_b^{(j)}(k)H_{ji}(k,l)u_a^{(i)}(l)
\\
&= \sum_{1\leq k\leq d_i\atop 1\leq l\leq d_j} v_b^{(j)}(k)H_{ji}(k,l)u_a^{(i)}(l)
+\sum_{k> d_i \text{ or }  l>d_j} v_b^{(j)}(k)H_{ji}(k,l)u_a^{(i)}(l)
\\&= H_{ij}(a,b)+\sum_{k> d_i \text{ or }  l>d_j} v_b^{(j)}(k)H_{ji}(k,l)u_a^{(i)}(l)\,.\label{e:orthogonality}
\end{align*}
Note that this equation is linear in the entries of $H_{ji}$. There are $d_i
d_j$ such linear equations, and each one has a unique variable $H_{ji}(a,b)$, so
the equations are linearly independent and each equation reduces the dimension
by 1. The claim follows from the fact that in total there are $\sum_{i\neq
j}d_id_j$ equations.

We have shown that $\mathcal I \rightarrow \mathcal S$ is a vector bundle over
the irreducible variety $\calS$, and thus it is
irreducible. Since $\dim p\inv(s)$ is the same for all $s \in \mathcal S$,
Theorem~\ref{t:fibers} gives the relation
$$\dim\II=\dim\calS+\dim p\inv(s)\,.$$
Since the dimension of~$\mathcal S$ is exactly the first
summation in the lemma statement, this proves the lemma.
\end{proof}

\begin{proof}[Proof of Theorem~\ref{t:upper-bound}]
We now consider the projection onto the second factor~$q\colon\mathcal I \rightarrow \mathcal H$.
If this map is dominant (i.e., generically the alignment problem is feasible), then by
Theorem~\ref{t:fibers} the fiber $q\inv (h)$ for a generic $h\in\mathcal
H$ has dimension \begin{equation}\label{e:dimqproj}
\dim q\inv (h)=\dim \mathcal I - \dim \mathcal H\,.\end{equation}
Since $\mathcal H$ has dimension
equal to $\sum_{1 \leq i, j \leq K \atop i \neq j} M_i N_j$, then
Lemma~\ref{l:dim-x} gives us the dimension in the statement of the theorem. Moreover,
if the quantity in \eqref{e:dimqproj} is negative, then the fiber $q\inv (h)$ at a generic point must
be empty. But the set of solutions to the tuple of channel matrices $h$ is given by $p(q\inv (h))$, so for generic channel matrices this means there is no feasible strategy.

Now we turn to the other necessary conditions for the existence of a solution. The first necessary condition $d_i\leq \min(M_i,N_i)$ is obvious. Next,
suppose that
$d_i + d_j > N_i \geq M_j$ for some $i$ and~$j$. Since $H_{ij}$ is a generic
$N_i \times M_j$ matrix, its nullspace will be trivial. Thus, $H_{ij} U_j$ will
be a $d_j$-dimensional vector space. Since $d_i + d_j > N_i$, the vector spaces
$H_{ij} U_j$ and $V_i$ cannot be orthogonal. If $d_i + d_j > M_j \geq N_i$, then the argument is
similar, but with the roles of $U_j$ and $V_i$ reversed.


Finally, any feasible strategy for the full set of $K$ transmitters and
receivers, will, in particular be feasible for any subset. Therefore, a
necessary condition for a general set of channel matrices to have a feasible
strategy is that the same is true for any subset of the pairs. Since the number
$t_A$ is the dimension of the variety of solutions when restricted just to the
transmitters and receivers indexed by $i \in A$, then $t_A$ must be non-negative in order to have a feasible strategy.
\end{proof}

Now, we make the assumption that
$N_i=M_i=N$ and $d_i=d$ for all $1\leq i\leq K$, and also that $K\geq 3$, and we
wish to prove a sufficient condition for the existence of a feasible strategy in
Theorem~\ref{t:symmetric}.
The following lemma reduces the problem of showing that almost all channel
tuples $h\in \HH$ have a solution to finding the dimension of the solution set
for a single channel tuple $h\in \HH$. Recall that $q$ is the projection of the
incidence variety $\II$ onto the second factor, and that $q$ being dominant
means that its image is dense in $\HH$, i.e.\ generic channel matrices have a solution.
\begin{lemma} \label{l:ZariskiCotDim2}
Suppose that there exists $h \in \mathcal H$ such that the dimension of
$q^{-1}(h)$ is at most $Kd(2N-(K+1)d)$. Then $q$ is dominant.
\end{lemma}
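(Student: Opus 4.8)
Here is how I would prove Lemma~\ref{l:ZariskiCotDim2}.

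The plan is to argue by contradiction using the fiber-dimension theorem, Theorem~\ref{t:fibers}, after first recording one dimension count. Specializing Lemma~\ref{l:dim-x} to $d_i = d$ and $M_i = N_i = N$ gives $\dim\mathcal{I} = 2Kd(N-d) + K(K-1)(N^2 - d^2)$, while $\dim\mathcal{H} = K(K-1)N^2$. Subtracting, the ``expected'' fiber dimension is
\begin{equation*}
\dim\mathcal{I} - \dim\mathcal{H} \ =\ 2Kd(N-d) - K(K-1)d^2 \ =\ Kd\bigl(2N - (K+1)d\bigr),
\end{equation*}
which is precisely the quantity appearing in the hypothesis of the lemma. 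So the content of the lemma is that a single channel tuple whose fiber has dimension no larger than the expected dimension is enough to force dominance.

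Next, I would suppose toward a contradiction that $q$ is \emph{not} dominant. Since $\mathcal{I}$ is irreducible by Lemma~\ref{l:dim-x}, its image has an irreducible Zariski closure $W := \overline{q(\mathcal{I})}$, which is a \emph{proper} closed subvariety of $\mathcal{H}$, so $\dim W \leq \dim\mathcal{H} - 1$. The corestricted map $q\colon \mathcal{I} \to W$ is a dominant polynomial map of irreducible varieties, so part~(1) of Theorem~\ref{t:fibers} applies to it: for every $h \in q(\mathcal{I})$ and every component $Z$ of the fiber $q^{-1}(h)$,
\begin{equation*}
\dim Z \ \geq\ \dim\mathcal{I} - \dim W \ \geq\ \dim\mathcal{I} - \dim\mathcal{H} + 1 \ =\ Kd\bigl(2N - (K+1)d\bigr) + 1 .
\end{equation*}
Now I apply this to the channel tuple $h$ furnished by the hypothesis. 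The hypothesis is only of use when $q^{-1}(h)\neq\varnothing$ (and indeed the lemma is meant to be invoked by exhibiting an explicit $h$ together with an explicit feasible strategy for it), so $h \in q(\mathcal{I})$; the displayed bound then forces $\dim q^{-1}(h) \geq Kd(2N-(K+1)d) + 1$, contradicting the assumption $\dim q^{-1}(h) \leq Kd(2N-(K+1)d)$. Hence $q$ is dominant, i.e.\ generic channel matrices admit a feasible strategy.

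I do not anticipate a real obstacle; the argument is short once Lemma~\ref{l:dim-x} and Theorem~\ref{t:fibers} are in hand. The two places to be careful are: (i) collapsing the algebra of $\dim\mathcal{I} - \dim\mathcal{H}$ to the closed form $Kd(2N-(K+1)d)$, and (ii) applying Theorem~\ref{t:fibers} to the corestriction $q\colon\mathcal{I}\to W$ rather than to $q\colon\mathcal{I}\to\mathcal{H}$ (the latter map is exactly what we are assuming is \emph{not} dominant, so the theorem cannot be applied directly to it), together with the trivial remark that the tuple $h$ in play has nonempty fiber. The genuinely hard work -- actually producing a channel tuple whose solution set has dimension at most $Kd(2N-(K+1)d)$ -- is deferred to the explicit construction that follows this lemma.
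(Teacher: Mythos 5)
Your proof is correct and is essentially the same argument as the paper's: both apply part~(1) of Theorem~\ref{t:fibers} to the corestriction $q\colon\mathcal{I}\to\overline{q(\mathcal{I})}$ and use $\dim\mathcal{I}-\dim\mathcal{H}=Kd(2N-(K+1)d)$ to conclude that the closure of the image has full dimension, hence equals the irreducible $\mathcal{H}$; you merely phrase it as a contradiction where the paper argues directly. Your explicit remark that the hypothesis is only meaningful when $q^{-1}(h)\neq\varnothing$ is a small point the paper leaves implicit, but it does not change the substance.
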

\begin{proof}
Let $h\in \HH$ be a point such that $q\inv(h)$ has at most the stated dimension. Let
$\ZZ_0=q(\II)$ be the projection of $\II$ onto the second factor, and
let $\ZZ$ denote the closure of~$\ZZ$. By these definitions, the projection
$q\colon\II\to\ZZ$ is dominant. Now, part~1 of Theorem~\ref{t:fibers} (dimension of
fibers) gives
$$
\dim q\inv (h)\geq \dim \II-\dim \ZZ\,,
$$
from which it follows that
$$\dim \ZZ\geq \dim \II-\dim q\inv(h)=\dim \HH\,.$$
But $\ZZ\subseteq \HH$, so equality of dimensions and irreducibility of $\HH$
implies $\ZZ=\HH$ (see e.g. \cite[Thm.~1, pg.~68]{Shaf}), or, in other words, that $q\colon\II\to\HH$ is dominant.
\end{proof}


Using Lemma~\ref{l:ZariskiCotDim2}, proving Theorem~\ref{t:symmetric} requires only that we find a tuple of channels $h\in \HH$ so that the set of solutions has the correct dimension. This is provided by the following lemma.
\begin{lemma}
Suppose that $K \geq 3$ and furthermore that $d_i = d$ and $M_i = N_i = N$ for
all users $i$. If $Kd(2N-(K+1)d)\geq 0$, then there exists $h\in \HH$ such that the dimension of $\dim q\inv (h)$ is at most $Kd(2N-(K+1)d)$.
\end{lemma}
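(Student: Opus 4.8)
The plan is to reduce, via Lemma~\ref{l:ZariskiCotDim2}, to exhibiting one well-chosen channel tuple, and then to build it by a careful degeneration of the channel. First I would record that the inequality we seek is automatically an equality once the fiber is nonempty: writing $\ZZ=\overline{q(\II)}$, the map $q\colon\II\to\ZZ$ is dominant, so part~1 of Theorem~\ref{t:fibers}, together with Lemma~\ref{l:dim-x} and $\dim\HH=K(K-1)N^2$, gives exactly as in the proof of Theorem~\ref{t:upper-bound}
$$\dim q\inv(h)\ \geq\ \dim\II-\dim\ZZ\ \geq\ \dim\II-\dim\HH\ =\ Kd\big(2N-(K+1)d\big)$$
for \emph{every} $h$ admitting a feasible strategy. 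Hence it suffices to produce a single $h$ whose solution set $p(q\inv(h))$ is nonempty and of dimension \emph{at most} $Kd(2N-(K+1)d)$; by the displayed bound it then has that dimension exactly, Lemma~\ref{l:ZariskiCotDim2} makes $q$ dominant, and Theorem~\ref{t:upper-bound} identifies the generic fiber dimension, which completes Theorem~\ref{t:symmetric}. A convenient first reduction is to the tight range $2N-(K+1)d\in\{0,1\}$: starting from a tight channel whose solution set has the expected dimension, one adjoins \emph{inert} antenna directions to every cross-channel matrix so that the new coordinates of each $U_i$ and $V_i$ become free parameters, and then $Kd(2N'-(K+1)d)=Kd(2N-(K+1)d)+2Kd(N'-N)$ absorbs the extra antennas; here one must check that the added directions impose no new constraint and relax none, which is not entirely automatic.

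The construction of the tight channel is the heart of the matter. The plan is to order the users cyclically and choose the cross channels $H_{ij}$ with a staircase (block-triangular) sparsity pattern, engineered so that at each receiver the interference from the \emph{distant} transmitters is forced to lie in the span of the interference from the \emph{neighbouring} ones. On a suitable collection of affine charts of the strategy space $\calS$, the goal is to show that the $K(K-1)d^2$ bilinear orthogonality equations $(V_i)^\dagger H_{ij}U_j=0$ become a \emph{triangular} system: traversing the cycle once, each successive $U_i$ and $V_i$ is determined by the subspaces already chosen up to a residual linear choice, and the residual choices total at most $Kd(2N-(K+1)d)$ dimensions. Small cases, in particular $K=3$ for all $d$, would be verified directly by computing the rank of the Jacobian of the constraint map at an explicit degenerate solution; this also fixes the pattern to use for general $K$.

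I expect the main obstacle to be that the degeneration has to be tuned very precisely. If $h$ is too generic the orthogonality equations are genuinely coupled --- this is exactly the obstruction, noted by Yetis et al.~\cite{YGJK10}, that defeats a direct application of Bernstein's theorem --- and the dimension of the solution set is not visible. But if $h$ is \emph{too} degenerate (say, with all $H_{ij}$ diagonal) the solution set acquires spurious positive-dimensional components and the bound fails. So one must find a pattern that simultaneously (a)~still admits at least one feasible strategy, (b)~triangularizes the system along the component one builds by hand, and (c)~creates no extra component. Establishing~(c) is the real work: it requires checking, chart by chart over $\calS$ and at every feasible strategy for the chosen $h$ --- the degenerate ones included --- that the differential of the constraint map $s\mapsto\big((V_i)^\dagger H_{ij}U_j\big)_{i\neq j}$ attains there the largest rank it can, so that \emph{every} component of $q\inv(h)$, not just the one produced by hand, has dimension at most $Kd(2N-(K+1)d)$. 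This Jacobian-rank bookkeeping across the Grassmannian charts, rather than any single clever device, is where the difficulty is concentrated.
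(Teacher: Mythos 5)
There is a genuine gap, and it is concentrated exactly where you say the difficulty lies. You propose to bound the dimension of \emph{every} component of $q\inv(h)$ by checking the rank of the Jacobian ``chart by chart over $\calS$ and at every feasible strategy for the chosen $h$,'' and you correctly flag this as the real work --- but you then leave it undone, and no explicit channel tuple is ever produced (the staircase pattern is only described qualitatively, with small cases deferred to computation). The missing idea that collapses this difficulty is that a bound at a \emph{single} point suffices. Fix one feasible strategy $s_0$ (say each $U_i$, $V_i$ spanned by the first $d$ standard basis vectors), choose the $H_{ij}$ among those making $s_0$ feasible, and bound the dimension of the Zariski cotangent space of $q\inv(h)$ at $s_0$: in local coordinates the orthogonality conditions read $\bar V_i^T G_{ij} + F_{ij}\bar U_j + \bar V_i^T\tilde H_{ij}\bar U_j=0$, and dropping the quadratic term leaves a linear system whose solution space bounds the local dimension of $q\inv(h)$ at $s_0$. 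That local dimension bounds the dimension of some component $Z$ of $q\inv(h)$ containing $s_0$, and Theorem~\ref{t:fibers}(1) applied to that one component already gives $\dim\ZZ\geq\dim\II-\dim Z\geq\dim\HH$, hence dominance. So spurious components elsewhere in the fiber, and other charts of $\calS$, are irrelevant; your worry (c) never has to be addressed, and your global lower bound $\dim q\inv(h)\geq Kd(2N-(K+1)d)$ for every feasible $h$ is not needed either.

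What remains after this reduction is the genuinely combinatorial content that your plan omits: an explicit choice of the blocks $F_{ij}$, $G_{ij}$ making the \emph{linearized} system have corank exactly $K(K-1)d^2$. The paper does this with a cyclic pattern $A_{ij}(k)=\mathrm{Id}_d$ iff $k\equiv i-j \pmod K$ distributed between $F_{ij}$ and $G_{ij}$, which makes the linear equations \emph{decoupled} (each equation kills one $d\times d$ block of some $\bar U_j$ or $\bar V_i$) rather than triangular; the even-$K$ case needs half-size blocks $B_{ij}$ of width $\lceil d/2\rceil$ or $\lfloor d/2\rfloor$ and a separate verification. Your auxiliary reduction to the tight case $2N-(K+1)d\in\{0,1\}$ is also unnecessary --- padding $F_{ij}$ and $G_{ij}$ with zero blocks handles general $N$ directly and avoids the ``inert directions impose no new constraint'' check you acknowledge is not automatic. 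As written, the proposal identifies the right high-level frame (degenerate cyclic channel plus Lemma~\ref{l:ZariskiCotDim2}) but neither supplies the construction nor the device that makes its verification tractable, so it does not constitute a proof.
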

\begin{proof}
We consider the point $s_0\in \calS$ where each $U_i$ and~$V_i$ is spanned by
the first $d$ standard basis vectors. Therefore, the set of channel matrices for
which $s_0$ is a valid strategy are those $H_{ij}$ such that
$$
\left(\begin{matrix} \mathrm{Id}_d \\ 0\end{matrix}\right)^T H_{ij}
\left(\begin{matrix} \mathrm{Id}_d \\ 0 \end{matrix} \right) = 0,
$$
for $i$ and $j$ distinct integers between $1$ and~$K$. Here $\mathrm{Id}_n$ denotes the $n\times n$ identity matrix.
It is clear that this implies that the upper left corner of $H_{ij}$ must be
zero, and thus we can write it in the form
$$
H_{ij}=\left(
\begin{matrix}
0 & F_{ij} \\
G_{ij} & \tilde{H}_{ij}
\end{matrix}\right)\,,
$$
where $F_{ij}$, $G_{ij}$, and $\tilde{H}_{ij}$ can be any matrices of size
$d \times (N-d)$, $(N-d) \times d$, and $(N-d) \times (N-d)$ respectively.
In a moment we will
specify $F_{ij}$ and $G_{ij}$, but for now we assume they are fixed, but
arbitrary.

We now investigate the set of solution strategies for these fixed channel matrices~$H_{ij}$.
In local coordinates around the strategy~$s_0$, the vector spaces $U_i$
and~$V_i$ can be written as column spans as follows:
$$
U_i = \operatorname{col span} \left( \begin{matrix} \mathrm{Id} \\ \bar U_i\end{matrix}\right), \qquad
V_i = \operatorname{col span} \left( \begin{matrix} \mathrm{Id} \\ \bar V_i\end{matrix}\right)\,,
$$
where $\bar U_i$ and $\bar V_i$ are $(N-d) \times d$ matrices of variables.
In order to satisfy the orthogonality condition, we need that
$$
\left( \begin{matrix} \mathrm{Id} \\ \bar V_i\end{matrix}\right)^T
H_{ij}
\left( \begin{matrix} \mathrm{Id} \\ \bar U_j\end{matrix}\right)
= \bar{V}_i^T G_{ij} + F_{ij} \bar{U}_j + \bar{V}_i^T \tilde{H}_{ij} \bar{U}_j
= 0.
$$
We linearize this problem by dropping the final, quadratic term:
\begin{equation}\label{e:ZariskiCot}
 \bar{V}_i^T G_{ij} + F_{ij} \bar{U}_j=0\,,\qquad 1\leq i,j\leq K
\,.
\end{equation} In algebraic
geometry, the vector space defined by the linear equations \eqref{e:ZariskiCot} is known as
the Zariski cotangent space, and its dimension gives an upper bound on the
dimension of the variety at the given point~\cite[Thm. 9.6.8(ii)]{CLO}. Hence we focus on the problem of computing the dimension of the set of solutions $(\bar U_j,\bar V_i)$ to \eqref{e:ZariskiCot}.

We now give our construction of the
matrices $F_{ij}$ and~$G_{ij}$ and find the dimension of the Zariski cotangent space.
We separate the construction into two cases: (1) $K$ is odd, (2) $K$ is even.

\paragraph{Case 1: $K$ is odd.} This case is relatively straightforward.
Recall that $F_{ij}$ is of size $d\times (N-d)$ and $G_{ij}$ is of size $(N-d)\times d$. We write
\begin{equation}\label{e:KoddFG}
F_{ij}=
\left(\begin{matrix}
A_{ij}(\frac{K-1}2+1) &A_{ij}(\frac{K-1}2+2) & \cdots &  A_{ij}(K-1) & 0
\end{matrix} \right)\,, \quad
G_{ij}=
\left(\begin{matrix}
A_{ij}(1)  \\
A_{ij}(2)  \\
\vdots \\
A_{ij}(\frac{K-1}2) \\
 0
\end{matrix} \right)\,,
\end{equation}
where each $A_{ij}(k)$ is a block of size $d\times d$ to be defined shortly, and the rightmost zero in $F_{ij}$ is of size
$d \times (N-d\frac{K+1}2)$ while
the bottom zero in $G_{ij}$ is a block of zeros of size
$(N-d\frac{K+1}2)\times d$.
Note that the assumption $2Kd(N-d)\geq K(K-1)d^2$ is equivalent to $N\geq d\frac{K+1}2$, so the specification of $F_{ij}$ and $G_{ij}$ above makes sense.

Let
\begin{equation}\label{e:Aij}
A_{ij}(k)=
\begin{cases}
\Id_{d}\quad &\text{if } k=i-j\\
0\quad &\text{otherwise}
\end{cases}\,,
\end{equation} where addition of indices is modulo $K$. 
Now write
\begin{equation}\label{e:KoddUV}
  \bar U_i=\left( \begin{matrix}
    X_i(1)\\
    X_i(2)\\
    \vdots \\
    X_i({\frac{K-1}2}) \\
    X_i
  \end{matrix}\right)\,,\quad
  \text{and} \quad
    \bar V_i=\left( \begin{matrix}
    Y_i(1)\\
    Y_i(2)\\
    \vdots \\
    Y_i({\frac{K-1}2}) \\
    Y_i
  \end{matrix}\right)\,,
\end{equation}
where $X_i(t),Y_i(t)$, $1\leq t\leq \frac{K-1}2$ are $d\times d$ blocks of variables from $\bar U_i,\bar V_i$, respectively, and $X_i$, $Y_i$ are blocks of size $[N-d\frac{K+1}2]\times d$ containing the remaining variables.

With this notation and choice of matrices, the equations \eqref{e:ZariskiCot} defining the Zariski cotangent space read
\begin{equation}
  X_i(t)=0, \quad 1\leq t\leq \frac{K-1}2, \quad \text{and} \quad Y_i(t)=0, \quad 1\leq t\leq \frac{K-1}2.
\end{equation}
Considering these equations for all $i$, $1\leq i\leq K$, we see that the codimension is $d^2 2K \left(\frac{K-1}2\right)=K(K-1)d^2$. Thus
the set of solutions to the equations \eqref{e:ZariskiCot} have dimension $2Kd(N-d)-K(K-1)d^2$.\\


\paragraph{Case 2: $K$ is even.} The idea behind the argument is the same as in case 1 but the details are slightly more involved.
Put
\begin{equation}\label{e:KoddFG}
F_{ij}=
\left(\begin{matrix}
A_{ij}(\frac{K+2}2+1) &A_{ij}(\frac{K+2}2+2) & \cdots &  A_{ij}(K-1) & \hat F_{ij} & 0
\end{matrix} \right),
\quad
G_{ij}=
\left(\begin{matrix}
A_{ij}(4)  \\
A_{ij}(5)  \\
\vdots \\
A_{ij}(\frac{K+2}2) \\
 \hat G_{ij}\\
0
\end{matrix} \right),
\end{equation}
where the matrices $A_{ij}(k)$ are defined above \eqref{e:Aij}, $\hat F_{ij}$ is a block of size $d\times \lrc{\frac{3d}2}$ when $j$ is odd and size $d\times \lrf{\frac{3d}2}$ when $j$ is even, $\hat G_{ij}$ is a block of size $\lrc{\frac{3d}2}\times d$ when $i$ is odd and size $\lrf{\frac{3d}2}\times d$ when $i$ is even, and any remaining entries are zero.
Note that the assumption $2Kd(N-d)\geq K(K-1)d^2$ is equivalent to $N-d-d\frac{K-4}2\geq \lrc{\frac{3d}2}$, so the specification of $F_{ij}$ and $G_{ij}$ makes sense.

We write
\begin{equation}
  \bar U_i=\left(\begin{matrix}
U^0_{i} \\
X_i\\
\bar X_i
\end{matrix} \right)\,, \quad \text{and}\quad   \bar V_i=\left(\begin{matrix}
V^0_{i} \\
Y_i\\
\bar Y_i
\end{matrix} \right)\,,
\end{equation}where $U^0_i,V^0_i$ are of size $d\frac{K-4}2\times d$, $X_i,Y_i$ are of size $\lrc{\frac{3d}2}\times d$ for odd values of $i$ and of size $\lrf{\frac{3d}2}\times d$ for even values of $i$ , and $\bar X_i,\bar Y_i$ contain the remaining variables (if any).
Now, exactly as in case 1 above, the choice of $F_{ij},G_{ij}$ forces $U_i^0=V_i^0=0$.

It remains to specify the matrices $\hat F_{ij},\hat G_{ij}$. Let
\begin{equation}
    \hat F_{ij}=\left(\begin{matrix}
A_{ij}(3) & B_{ij}
\end{matrix} \right)\,,
\quad \text{and}\quad
\hat G_{ij}=\left(\begin{matrix}A_{ij}(1) \\
B_{ij}^T
\end{matrix} \right)\,,
\end{equation}
where again $A_{ij}(k)$ is defined in \eqref{e:Aij}.
The matrices $B_{ij}$ are either $d\times \lrc{\frac d2}$ or$d\times \lrf{ \frac d2}$ depending on the indices;
$B_{ij}$ is given by
\begin{equation}\label{e:Bij}
B_{ij}(k)=
\begin{cases}
\left(\begin{matrix} \Id_{\dc} \\ 0\end{matrix} \right) & \text{ if } i \text{ is even and }j=i+1 \text{ or } i\text{ is odd and }j=i+3 \\
\left(\begin{matrix} \Id_{\df} \\ 0\end{matrix} \right) & \text{ if } i \text{ is odd and }j=i+1 \\
\left(\begin{matrix} 0\\ \Id_{\dc} \end{matrix} \right) & \text{ if } i \text{ is odd and }j=i+2 \\
\left(\begin{matrix} 0 \\ \Id_{\df} \end{matrix} \right) & \text{ if } i \text{ is even and }j=i+3\text{ or } i\text{ is even and }j=i+2 \\
0\quad &\text{otherwise}
\end{cases}\,,
\end{equation} where the 0 in \eqref{e:Bij} denotes is a block of zeros of appropriate size to ensure that $B_{ij}$ is rectangular with $d$ rows. Here, again, addition of indices is modulo $K$.

Let
  \begin{equation}
X_i= \left(\begin{matrix}
    \begin{matrix}
      X_i^1 & X_i^2
    \end{matrix}
    \\ X_i^3
  \end{matrix}\right)\,,
\quad \text{and}\quad
  Y_i^T=\left(\begin{matrix}
    \begin{matrix}
      Y_i^1 \\ Y_i^2
    \end{matrix}
    & Y_i^3
  \end{matrix}\right)\,,
\end{equation}
 where for \emph{even} values of $i$ the blocks $X_{i+1}^3,Y_{i}^1,Y_{i+1}^2$ are $\lrc{\frac d2}\times d$, $X_i^3,Y_{i}^2,Y_{i+1}^1$ are $\lrf{\frac d2}\times d$, $X_{i}^1,X_{i+1}^2,Y_{i+1}^3$ are $d\times \lrc{\frac d2}$, and $X_{i}^2,X_{i+1}^1,Y_{i}^3$ is $d\times \lrf{\frac d2}$.
Then our linear equation \eqref{e:ZariskiCot} implies
\begin{align*}0=Y_i^T \hat G_{ij}+\hat F_{ij} X_j&=\left(\begin{matrix}
  Y_i^1 \\ Y_i^2
\end{matrix}\right)A_{ij}(1)+Y_i^3B_{ij}^T
\\ &\qquad + A_{ij}(3)\left( \begin{matrix}
    X_j^1 & X_j^2
\end{matrix}\right)+B_{ij}X_j^3\,.
\end{align*}
For each even value of $i$  we end up with the equations
\begin{alignat*}{2}
0&=\left(\begin{matrix}Y_{i+3}^1+X^3_{i} \\ Y_{i+3}^2\end{matrix}\right),\quad 0 =\left(\begin{matrix}Y_{i+4}^1+X^3_{i+1}\\ Y_{i+4}^2 \end{matrix}\right),\quad 0=\left(\begin{matrix}Y_{i+3}^1\\ Y_{i+3}^2+X^3_{i+1} \end{matrix}\right), \\
0&=\left(\begin{matrix}X_{i}^1+Y^3_{i+1} & X_{i}^2\end{matrix}\right),\quad 0=\left(\begin{matrix}X_{i}^1& X_{i}^2 +Y^3_{i+2}\end{matrix}\right),\quad 0=\left(\begin{matrix}X_{i+1}^1& X_{i+1}^2+Y^3_{i+2} \end{matrix}\right),
\end{alignat*}
which implies that all variables appearing here are zero, i.e. $X_i,Y_i=0$ for each $i$.

This proves that precisely $K(K-1)d^2$ entries of $V_i,V_i$ must be zero for this choice of matrices $F_{ij}, G_{ij}$, finishing case 2 and completing the proof of the lemma.
\end{proof}

\section*{Acknowledgment}
We thank Bernd Sturmfels for insightful discussions and the authors of \cite{RGL11} for sharing their related manuscript with us.

\nocite{BT09}

\bibliographystyle{ieeetr}
\bibliography{BIB}

\end{document}